\documentclass[english,british,3p ]{llncs}
\usepackage[T1]{fontenc}
\usepackage[latin9]{inputenc}
\usepackage{babel}
\usepackage{mathrsfs}
\usepackage{amsmath}
\usepackage{amssymb}
\usepackage[numbers]{natbib}
\usepackage[unicode=true]
 {hyperref}
\usepackage{breakurl}

\makeatletter

\providecommand{\tabularnewline}{\\}

\newenvironment{lyxlist}[1]
{\begin{list}{}
{\settowidth{\labelwidth}{#1}
 \setlength{\leftmargin}{\labelwidth}
 \addtolength{\leftmargin}{\labelsep}
 }}
{\end{list}}

\usepackage[vlined,plain,noresetcount]{algorithm2e} 

\SetAlgoSkip{smallskip}
\setlength{\intextsep}{3pt}

\usepackage{setspace}

\usepackage[adobe-utopia]{mathdesign}



\usepackage[margin=1in]{geometry}

\makeatother

\begin{document}

\title{Notes on the existence of solutions in the pairwise comparisons method
using the Heuristic Rating Estimation approach}

\selectlanguage{english}%

\author{Konrad Ku\l{}akowski}

\institute{AGH University of Science and Technology, \\
al. Mickiewicza 30, Kraków, Poland, \\
\href{mailto:konrad.kulakowski@agh.edu.pl}{konrad.kulakowski@agh.edu.pl}}
\maketitle
\selectlanguage{british}%
\begin{abstract}
Pairwise comparisons are a well-known method for modelling of the
subjective preferences of a decision maker. A popular implementation
of the method is based on solving an eigenvalue problem for $M$ -
the matrix of pairwise comparisons. This does not take into account
the actual values of preference. The Heuristic Rating Estimation \emph{(HRE)}
approach is a modification of this method in which allows modelling
of the reference values. To determine the relative order of preferences
is to solve a certain linear equation system defined by the matrix
$A$ and the constant term vector $b$ (both derived from $M$). 

The article explores the properties of these equation systems. In
particular, it is proven that for some small data inconsistency the
$A$ matrix is an \emph{M-matrix}, hence the equation proposed by
the \emph{HRE} approach has a unique strictly positive solution. 
\end{abstract}

\section{Introduction }

The first written evidence of the use of pairwise comparisons \emph{(PC)}
dates back to the thirteenth century \citep{Colomer2011rlfa}. After
a period of growth in the first half of the twentieth century, the
pairwise comparisons method solidified in the form of the Analytic
Hierarchy Process \emph{(AHP)} proposed by Saaty \citep{Saaty1977asmfSIMPL}.
Starting as a voting method, \emph{PC} has become a way of deciding
on the relative importance (relative utility) of concepts (alternatives),
used in decision theory \citep{Saaty1977asmfSIMPL}, economics \citep{Peterson1998evabt},
psychometrics and psychophysics \citep{Thurstone27aloc} and others.
The utility of the method has been confirmed many times by various
researchers \citep{Ho2008iahp}. The theory of paired comparison is
growing all the time. Examples of such exploration are the \emph{Rough
Set} approach \citep{Greco2011fkSIMPL}, fuzzy \emph{PC} relation
handling \citep{Mikhailov2003dpff}, incomplete \emph{PC} relation
\citep{Bozoki2010oocoSIMPL,Fedrizzi2007ipca,Koczkodaj1999mnei}, data
inconsistency reduction \citep{Koczkodaj2010odbi} and non-numerical
rankings \citep{Janicki2012oapc}. A recent contribution to the pairwise
comparisons method includes the \emph{Heuristic Rating Estimation}
\emph{(HRE)} approach \citep{Kulakowski2013ahre,Kulakowski2013hreaCoRR}
that allows the user to explicitly define a reference set of concepts,
for which the utilities (the ranking values) are known a priori. The
base heuristics used in \emph{HRE} proposes to determine the relative
values of a single non-reference concept as a weighted average of
all the other concepts. Such a proposition leads to formulation a
linear equation system defined by the matrix $A$ and the strictly
positive vector of constant terms $b$. As will be shown later, in
the most interesting cases the matrix $A$ is an \emph{M-matrix} as
defined in \citep{Plemmons1976mcnm}. The sufficient condition for
$A$ to be an \emph{M-matrix} is formulated using the notion of inconsistency
referring to the quantitative relationship between entries of the
pairwise comparisons matrix $W$. In particular it is shown that the
fully consistent \emph{PC} matrix $W$ implies that $A$ is an \emph{M-matrix}.

\section{Preliminaries }

\subsection{Pairwise comparisons method}

The input to the \emph{PC} method is $M=(m_{ij})\wedge m_{ij}\in\mathbb{R}_{+}\wedge i,j\in\{1,\ldots,n\}$
a \emph{PC} matrix that expresses a quantitative relation $R$ over
the finite set of concepts $C\overset{\textit{df}}{=}\{c_{i}\in\mathscr{C}\wedge i\in\{1,\ldots,n\}\}$
where $\mathscr{C}$ is a non empty universe of concepts and $R(c_{i},c_{j})=m_{ij}$,
$R(c_{j},c_{i})=m_{ji}$. The values $m_{ij}$ and $m_{ji}$ represent
subjective expert judgment as to the relative importance, utility
or quality indicators of the concepts $c_{i}$ and $c_{j}$. Thus,
according to the best knowledge of experts it should hold that $c_{i}=m_{ij}c_{j}$
.
\begin{definition}
\label{def:A-matrix-recip}A matrix $M$ is said to be reciprocal
if for every $i,j\in\{1,\ldots,n\}$ holds $m_{ij}=\frac{1}{m_{ji}}$,
and $M$ is said to be consistent if for every $i,j,k\in\{1,\ldots,n\}$
holds $m_{ij}\cdot m_{jk}\cdot m_{ki}=1$.
\end{definition}
Since the data in the \emph{PC} matrix represents the subjective opinions
of experts, they might be inconsistent. Hence, there may exist a triad
$m_{ij},m_{jk},m_{ki}$ of entries in $M$ for which $m_{ik}\cdot m_{kj}\neq m_{ij}$.
This leads to a situation in which the relative importance of $c_{i}$
with respect to $c_{j}$ is either $m_{ik}\cdot m_{kj}$ or $m_{ij}$.
This observation gives rise to both a priority deriving method that
transforms even an inconsistent matrix of pairwise comparisons into
a consistent priority vector, and an inconsistency index describing
to what extent the matrix $M$ is inconsistent. There are a number
of priority deriving methods and inconsistency indexes \citep{Bozoki2008osak,Ishizaka2011rotm,Brunelli2013apoi}.
For the purpose of the article the \emph{Koczkodaj's inconsistency
index} is adopted \citep{Koczkodaj1993ando}. 

Let us denote: 
\begin{equation}
\kappa_{i,j,k}\overset{\textit{df}}{=}\min\left\{ \left|1-\frac{m_{ij}}{m_{ik}m_{kj}}\right|,\left|1-\frac{m_{ik}m_{kj}}{m_{ij}}\right|\right\} \label{eq:triad_inconsistency}
\end{equation}

\begin{definition}
\label{def:Koczkodaj's-inconsistency-index}Koczkodaj's inconsistency
index $\mathscr{K}$ of $n\times n$ and ($n>2)$ reciprocal matrix
$M$ is equal to
\begin{equation}
\mathscr{K}(M)=\underset{i,j,k\in\{1,\ldots,n\}}{\max}\left\{ \kappa_{i,j,k}\right\} \label{eq:koczkod_k}
\end{equation}

where $i,j,k=1,\ldots,n$ and $i\neq j\wedge j\neq k\wedge i\neq k$. 
\end{definition}
The result of the pairwise comparisons method is a ranking - a function
that assigns values to concepts. Formally, it can be defined as follows: 
\begin{definition}
\label{def:ranking_fun}The ranking function for $C$ (the ranking
of $C$) is a function $\mu:C\rightarrow\mathbb{R}_{+}$ that assigns
to every concept from $C\subset\mathscr{C}$ a positive value from
$\mathbb{R}_{+}$. 
\end{definition}
Thus, $\mu(c)$ represents the ranking value for $c\in C$. The $\mu$
function is usually defined as a vector of weights 
\begin{equation}
\mu\overset{\textit{df}}{=}\left[\mu(c_{1}),\ldots,\mu(c_{n})\right]^{T}\label{eq:weight_vector}
\end{equation}
. According to the most popular eigenvalue based approach, proposed
by \emph{Saaty} \citep{Saaty1977asmfSIMPL}, the final ranking $\mu_{\textit{ev}}$
is determined as the principal eigenvector of the $PC$ matrix $M$,
rescaled so that the sum of all its entries is $1$, i.e. 
\begin{equation}
\mu_{\textit{ev}}=\left[\frac{\mu_{\textit{max}}(c_{1})}{s_{\textit{ev}}},\ldots,\frac{\mu_{\textit{\textit{max}}}(c_{n})}{s_{\textit{ev}}}\right]^{T}\,\,\text{and}\,\,\, s_{\textit{ev}}=\underset{i=1}{\overset{n}{\sum}}\mu_{\textit{max}}(c_{i})\label{eq:eigen_value_ranking}
\end{equation}
where $\mu_{\textit{ev}}$ - the ranking function, $\mu_{\textit{max}}$
- the principal eigenvector of $M$. A more complete overview including
other methods can be found in \citep{Bozoki2008osak,Ishizaka2011rotm}.

\subsection{Heuristic Rating Estimation approach}

In the classical pairwise comparisons method the ranking function
$\mu$ for all the concepts $c\in C$ is initially unknown. Hence
every $\mu(c)$ need to be determined by the priority deriving procedure.
In real life, however, may happen that for some concepts the priority
values are known. Sometimes decision makers have extra knowledge about
the group of elements $C_{K}\subseteq C$ that allows them to determine
$\mu(c)$ for $C_{K}$ in advance. 

For example, let $c_{1},c_{2}$ and $c_{3}$ be a goods that the company
$X$ intends to place on the market, whilst $c_{4}$ and $c_{5}$
have been available for some time in stores. In order to choose the
most profitable and promising product out of $c_{1},\ldots,c_{3}$
the company $X$ want to calculate the function $\mu$ for $c_{1},c_{2}$
and $c_{3}$. Due to some similarities between $c_{1},\ldots,c_{3}$
and the pair $c_{4},c_{5}$ the company $X$ want to include them
in the ranking treating as a reference. Of course it makes no sense
to ask experts about how profitable $c_{4}$ and $c_{5}$ are. The
values $\mu(c_{4})$ and $\mu(c_{5})$ can be easily determined based
on sales reports. 

The situation as outlined in this simple example leads to the \emph{Heuristic
Rating Estimation (HRE)} model proposed in \citep{Kulakowski2013ahre,Kulakowski2013hreaCoRR}.
The main heuristics of the \emph{HRE} model assume that the set of
concepts $C$ is composed of the unknown concepts $C_{U}=\{c_{1},\ldots,c_{k}\}$
and known (reference) concepts $C_{K}=\{c_{k+1},\ldots,c_{n}\}$.
Of course, only the values $\mu_{j}$ for $c\in C_{U}$ need to be
estimated, whilst the values $\mu(c_{i})$ for $c_{i}\in C_{K}$ are
considered to be known. The adopted heuristics assumes that for every
unknown $ $$c_{j}\in C_{U}$ the value $\mu(c_{j})$ should be estimated
as the arithmetic mean of all the other values $\mu(c_{i})$ multiplied
by factor $m_{ji}$: 
\begin{equation}
\mu(c_{j})=\frac{1}{n-1}\sum_{i=1,i\neq j}^{n}m_{ji}\mu(c_{i})\label{eq:append3_eq1}
\end{equation}

Thus, the value $\mu(c_{i})$ for each unknown concept $c_{j}\in C_{U}$
is calculated according to the following formulas:

\begin{equation}
\begin{array}{ccc}
\mu(c_{1}) & = & \frac{1}{n-1}(m_{2,1}\mu(c_{2})+\dotfill+m_{n,1}\mu(c_{n}))\\
\mu(c_{2}) & = & \frac{1}{n-1}(m_{1,2}\mu(c_{1})+m_{3,2}\mu(c_{3})+\ldots+m_{n,2}\mu(c_{n}))\\
\hdotsfor[1]{3}\\
\begin{array}{c}
\mu(c_{k})\\
\\
\end{array} & \begin{array}{c}
=\\
\\
\end{array} & \begin{array}{c}
\frac{1}{n-1}\left(m_{1,k}\mu(c_{1})+\ldots\ldots+m_{k-1,k}\mu(c_{k-1})+\right.\\
\left.+m_{k+1,k}\mu(c_{k+1})+\ldots\ldots+m_{n,k}\mu(c_{n})\right)
\end{array}
\end{array}\label{eq:append3_eq2}
\end{equation}

Since the values $\mu(c_{k+1}),\ldots,\mu(c_{n})$ are known and constant
($c_{k+1},\ldots,c_{n}$ are the reference concepts), they can be
grouped together. Let us denote:

\begin{equation}
b_{j}=\frac{1}{n-1}m_{k+1,j}\mu(c_{k+1})+\ldots+\frac{1}{n-1}m_{n,j}\mu(c_{n})\label{eq:append3_eq3}
\end{equation}

Thus (\ref{eq:append3_eq2}) could be written as the linear equation
system $A\mu=b$ where the matrix $A$ is:

\begin{equation}
A=\left[\begin{array}{ccc}
1 & \cdots & -\frac{1}{n-1}m_{1,k}\\
-\frac{1}{n-1}m_{2,1} & \cdots & -\frac{1}{n-1}m_{2,k}\\
\vdots & \ddots & \vdots\\
-\frac{1}{n-1}m_{k,1} & \cdots & 1
\end{array}\right],\label{eq:A_and_b}
\end{equation}

and the vectors $b$ and $\mu$ are: 
\begin{equation}
b=\left[\begin{array}{c}
\frac{1}{n-1}\sum_{i=k+1}^{n}m_{1,i}\mu(c_{i})\\
\frac{1}{n-1}\sum_{i=k+1}^{n}m_{2,i}\mu(c_{i})\\
\vdots\\
\frac{1}{n-1}\sum_{i=k+1}^{n}m_{k,i}\mu(c_{i})
\end{array}\right],\,\,\,\,\,\,\,\mu=\left[\begin{array}{c}
\mu(c_{1})\\
\mu(c_{2})\\
\vdots\\
\mu(c_{2})
\end{array}\right]\label{eq:vectors_b_and_mu}
\end{equation}

It is worth noting that $b>0$, since every $b_{i}$ for $i=1,\ldots,k$
is a sum of strictly positive components. According to (Def.~\ref{def:ranking_fun})
the ranking results must be strictly positive, hence only strictly
positive vectors $\mu$ are considered to be feasible.

\subsection{M-matrices}

The answer to the question concerning the existence of solution of
the linear equation system $A\mu=b$ requires knowledge of certain
properties of the \emph{M-matrix} \citep{Quarteroni2000nm}. For this
purpose, let us denote $\mathcal{M}_{\mathbb{R}}(n)$ - the set of
$n\times n$ matrices over $\mathbb{R}$, $\mathcal{M}_{\mathbb{Z}}(n)$
- the set of all $A=[a_{ij}]\in\mathcal{M}_{\mathbb{R}}(n)$ with
$a_{ij}\leq0$ if $i\neq j$ and $i,j\in\{1,\ldots,j\}$. Moreover,
for every matrix $A\in\mathcal{M}_{\mathbb{R}}(n)$ and vector $b\in\mathbb{R}^{n}$
the notation $A\geq0$ and $b\geq0$ will mean that each entry of
$A$ and $b$ is non-negative and neither $A$ nor $b$ equals $0$.
The spectral radius of $A$ is defined as $\rho(A)=\max\{|\lambda|:\det(\lambda I-A)=0\}$.
\begin{definition}
\label{def:M-matrix-def}An $n\times n$ matrix that can be expressed
in the form $A=sI-B$ where $B=[b_{ij}]$ with $b_{ij}\geq0$ for
$i,j\in\{1,\ldots,n\}$, and $s\geq\rho(B)$, the maximum of the moduli
of the eigenvalues of B, is called \emph{M-matrix}.
\end{definition}
In practice, solving many problems in the biological and social sciences
can be reduced to problems, involving \emph{M-matrices} \citep{Plemmons1976mcnm}.
For this reason, \emph{M-matrices} have been of interest to researchers
for a long time and many of their properties have already been proven.
Following \citep{Plemmons1976mcnm} some of them are recalled below
in the form of the Theorem~\ref{PlemmonsTheo}.
\begin{theorem}
\label{PlemmonsTheo}For every $A\in\mathcal{M}_{\mathbb{Z}}(n)$
each of the following conditions is equivalent to the statement: $A$
is a nonsingular \emph{M-matrix}.\end{theorem}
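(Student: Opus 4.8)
The statement gathers a family of standard characterisations of nonsingular M-matrices, so the plan is to prove them mutually equivalent by arranging them into a cycle of implications, with the Perron--Frobenius theorem and the Neumann series as the two supporting tools. The first step is to fix a convenient representation. Because $A\in\mathcal{M}_{\mathbb{Z}}(n)$ has nonpositive off-diagonal entries, setting $s=\max_{i}a_{ii}$ makes $B=sI-A$ nonnegative, so $A=sI-B$ with $B\geq 0$. The eigenvalues of $A$ are precisely the numbers $s-\lambda$ with $\lambda$ an eigenvalue of $B$, and one checks that whether $s-\rho(B)>0$ is insensitive to the particular choice of $s$. Consequently the defining requirement that $A$ be a \emph{nonsingular} M-matrix, namely $s\geq\rho(B)$ together with $\det A\neq 0$, collapses to the single inequality $s>\rho(B)$, equivalently $\rho(C)<1$ for $C=s^{-1}B\geq 0$. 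This reduction is what lets every spectral or determinantal assertion about $A$ be read off from whether the nonnegative matrix $C$ has spectral radius below one.

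The forward half of the cycle would run through the Neumann series. If $\rho(C)<1$ then $I-C$ is invertible with $(I-C)^{-1}=\sum_{k\geq 0}C^{k}$, and since $C\geq 0$ each partial sum is nonnegative, giving $A^{-1}=s^{-1}(I-C)^{-1}\geq 0$; this is the inverse-positivity (monotonicity) condition. From $A^{-1}\geq 0$ I would obtain a strictly positive $x$ with $Ax>0$ simply by taking $x=A^{-1}e$ for any $e>0$, noting that $A^{-1}$, being nonsingular, has no zero row, so $x>0$. Positive stability and the positivity of $\det A$ then fall out of the eigenvalue description, since $\operatorname{Re}(s-\lambda)\geq s-\rho(B)>0$ for every eigenvalue $\lambda$ of $B$.

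To close the cycle I would invoke Perron--Frobenius for the genuinely nontrivial direction. The clean route is: suppose some $x>0$ satisfies $Ax>0$, equivalently $x>Cx$; let $w\geq 0$, $w\neq 0$, be the left Perron eigenvector with $w^{T}C=\rho(C)w^{T}$. Then $w^{T}x>w^{T}Cx=\rho(C)\,w^{T}x$, and because $w^{T}x>0$ this forces $\rho(C)<1$, i.e.\ $A$ is a nonsingular M-matrix. The determinantal characterisations are threaded in by induction on $n$: every leading principal submatrix of a Z-matrix is again a Z-matrix, and restricting the positive vector $x$ shows it inherits the M-matrix property, whence all leading principal minors are positive.

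I expect the reverse implication supplied by Perron--Frobenius to be the main obstacle, since it is the only point where a structural fact about nonnegative matrices, that the spectral radius is attained by a nonnegative (left) eigenvector, is indispensable; everything else is either the formal Neumann-series identity or routine bookkeeping among the implications. Because all of these facts are classical, the complete argument is the one recorded in \citep{Plemmons1976mcnm}, and here it suffices to indicate the cyclic structure and the two tools that drive it.
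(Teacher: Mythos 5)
You should first note that the paper contains no proof of this theorem at all: it is recalled verbatim from \citep{Plemmons1976mcnm} as a known characterisation of nonsingular M-matrices (the paper's own proving effort is reserved for Theorem~\ref{main_theorem}, which merely \emph{applies} conditions (1)--(3)). So there is no in-paper argument to compare against, and your proposal has to stand on its own. On its own terms it is the standard Berman--Plemmons argument and its cycle is sound: the reduction of ``nonsingular M-matrix'' to $\rho(C)<1$ for $C=s^{-1}(sI-A)$ is correct (if $s=\rho(B)$ then $0=s-\rho(B)$ is an eigenvalue of $A$ by Perron--Frobenius, so nonsingularity forces strict inequality); the Neumann series gives $A^{-1}=s^{-1}\sum_{k\geq 0}C^{k}\geq 0$, i.e.\ condition (1); $x=A^{-1}e$ with $e>0$ gives condition (2), with strict positivity of $x$ exactly as you argue; and the left Perron vector argument $w^{T}x>\rho(C)\,w^{T}x$ correctly closes the loop, since $x-Cx=s^{-1}Ax>0$ componentwise and $w\geq 0$, $w\neq 0$ make $w^{T}(x-Cx)>0$.

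Two loose ends deserve attention. First, you never address condition (3) --- existence of a positive diagonal $D$ with $AD$ having positive row sums --- which is the one condition the paper actually invokes (with $D=I$) in proving Theorem~\ref{main_theorem}. It is a one-line equivalence with (2): given such a $D$, take $x=D\mathbf{1}>0$ so that $Ax=AD\mathbf{1}>0$; conversely, given $x>0$ with $Ax>0$, take $D=\operatorname{diag}(x_{1},\ldots,x_{n})$. Trivial, but since it is one of the three listed conditions it must be said. Second, forming $C=s^{-1}B$ presumes $s>0$; on the hard direction this is automatic, since $Ax>0$ with $x>0$ gives $a_{ii}x_{i}>-\sum_{j\neq i}a_{ij}x_{j}\geq 0$, so all diagonal entries are positive --- worth a sentence, as $s=\max_{i}a_{ii}$ could otherwise be nonpositive for a general Z-matrix. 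Finally, your closing induction on leading principal minors proves a determinantal characterisation that is not among the statement's three conditions; it is harmless but extraneous here.
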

\begin{enumerate}
\item $A$ is inverse positive. That is, $A^{-1}$ exists and $A^{-1}\geq0$
\item $A$ is semi-positive. That is, there exists $x>0$ with $Ax>0$ 
\item There exists a positive diagonal matrix $D$ such that $AD$ has all
positive row sums.
\end{enumerate}
In the context of equation $A\mu=b$ it is worth noting that if $A$
is nonsingular then $A^{-1}$ is also nonsingular, and thus the the
vector $\mu$ could be determined as $A^{-1}b$. Moreover for $b>0$
(every entry of vector $b$ is a sum of strictly positive values)
and $A$ - \emph{M-matrix}, due to the theorem above $A^{-1}\geq0$,
the vector $\mu$ also must be strictly positive i.e. $\mu=A^{-1}b>0$.

\section{Inconsistency based condition for the existence of a solution}

The entries of $M=[m_{ij}]$ represent comparative opinions of experts,
they are thus inherently strictly positive, that is $M>0$. For the
same reason the matrix $A$ (\ref{eq:A_and_b}), formed on the basis
of $M$, has positive entries only on the diagonal, i.e. $A\in\mathcal{M}_{\mathbb{Z}}(n)$.
Therefore proving that $A$ satisfies any of the conditions of the
Theorem \ref{PlemmonsTheo}, implies that $A$ is an \emph{M-matrix}.
Hence, due to the remarks below the Theorem \ref{PlemmonsTheo}, and
the fact that in the \emph{HRE} approach $b>0$, the equation $A\mu=b$
has only one strictly positive solution $\mu$. 

The sufficient condition for $A$ to be an \emph{M-matrix} is formulated
with the help of the inconsistency index $\mathcal{K}(M)$ (Def.~\ref{def:Koczkodaj's-inconsistency-index}).
Using an inconsistency index simplifies the evaluation of $A\mu=b$
and enables linking the reliability of expert assessments (the paired
ranking for which the inconsistency index is too high are considered
as unreliable \citep{Saaty1977asmfSIMPL}) with the solution existence
problem. 
\begin{theorem}
\label{main_theorem}The linear equation system $A\mu=b$ introduced
in the HRE approach has exactly one strictly positive solution for
$0<r\leq n-2$ if 
\begin{equation}
\mathcal{K}(M)<1-\frac{1+\sqrt{1+4(n-1)(n-r-2)}}{2(n-1)}\label{eq:condit}
\end{equation}

where $n=|C_{U}\cup C_{K}|$ - is the number of all the estimated
concepts, $r=|C_{K}|$ - is the number of the known concepts. \end{theorem}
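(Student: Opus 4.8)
The plan is to show that $A$ is a nonsingular \emph{M-matrix} and then invoke the remarks after Theorem~\ref{PlemmonsTheo}: since $b>0$, inverse positivity gives $\mu=A^{-1}b>0$, unique. Because the diagonal of $A$ is $1$ and every off-diagonal entry equals $-\frac{1}{n-1}m_{ji}<0$, we already have $A\in\mathcal{M}_{\mathbb{Z}}(n)$, so it suffices to verify one of the three equivalent conditions of Theorem~\ref{PlemmonsTheo}. I would use semi-positivity (item~2): exhibit a strictly positive vector $x$ with $Ax>0$, which is the same as producing a positive diagonal $D$ making the row sums of $AD$ positive (item~3). The natural certificate is a column of $M$ taken against a fixed reference concept $c_p$: set $x_i=m_{i,p}$, which is strictly positive and, in the consistent case, is proportional to the latent priority of $c_i$, so that $Ax$ is essentially the defect in the heuristic~(\ref{eq:A_and_b}).

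With this choice the $j$-th coordinate of $Ax$ is $x_j-\frac{1}{n-1}\sum_{i\ne j,\,i\le k}m_{ji}x_i$, so positivity of every row reduces to the scalar inequalities $\frac{1}{n-1}\sum_{i\ne j,\,i\le k}\frac{m_{ji}x_i}{x_j}<1$. Each summand is a ratio of the type $m_{ji}m_{i,p}/m_{j,p}$, which is exactly the quantity governed by a Koczkodaj triad (Def.~\ref{def:Koczkodaj's-inconsistency-index}): from $\mathcal{K}(M)<\varepsilon$ one obtains, for every triad, $1-\varepsilon<\frac{m_{ab}}{m_{ac}m_{cb}}<\frac{1}{1-\varepsilon}$, and applying this along the path joining $c_i$ and $c_j$ through the reference bounds each summand by a power of $\frac{1}{1-\mathcal{K}(M)}$. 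Accounting row by row, the reference term is controlled by a single triad and contributes $\frac{1}{1-\mathcal{K}(M)}$, while each of the remaining $n-r-2$ terms is compared to the reference through an intermediate concept and is bounded by $\frac{1}{(1-\mathcal{K}(M))^2}$, so that row positivity follows from $\frac{1}{n-1}\big(\frac{1}{1-\mathcal{K}(M)}+\frac{n-r-2}{(1-\mathcal{K}(M))^2}\big)<1$. I expect this step to be the main obstacle: turning the purely \emph{local} triad bounds into one row-sum bound that holds \emph{uniformly} over all rows, and choosing the reference and path so that the inconsistency propagates in a controlled way, is the real content, whereas everything else is bookkeeping.

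It then remains to convert this condition into the stated threshold. Writing $u=1-\mathcal{K}(M)\in(0,1]$ and clearing denominators turns the inequality into $(n-1)u^2-u-(n-r-2)>0$. The product of the roots of this upward-opening quadratic is $-\frac{n-r-2}{n-1}\le0$, so its smaller root is nonpositive and the inequality holds precisely when $u$ exceeds the larger root $\frac{1+\sqrt{1+4(n-1)(n-r-2)}}{2(n-1)}$; substituting $\mathcal{K}(M)=1-u$ yields exactly~(\ref{eq:condit}). The hypothesis $0<r\le n-2$ is what makes this final step meaningful: $r\ge1$ forces $k=n-r<n$, so the count of off-diagonal terms stays below $n-1$ and the right-hand side of~(\ref{eq:condit}) is strictly positive, while $r\le n-2$ gives $n-r-2\ge0$, so the discriminant is nonnegative and the root is real. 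The fully consistent case $\mathcal{K}(M)=0$ is then an immediate corollary, since the bound in~(\ref{eq:condit}) is strictly positive, which confirms the claim in the introduction that a consistent $W$ forces $A$ to be an \emph{M-matrix}.
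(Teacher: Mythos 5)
Your proof is correct, and it takes a genuinely different route from the paper's. The paper never certifies $A$ directly: it pivots on the \emph{unknown} concept $c_k$, writes $m_{ij}=t_{ij}m_{ik}m_{kj}$ with $t_{ij}\in[\alpha,1/\alpha]$, factorizes $A=BC$ with $C$ the positive diagonal matrix (\ref{eq:C_eq}) built from the $k$-th row of $M$, checks condition 3 of Theorem~\ref{PlemmonsTheo} for $B$ (with $D=I$) via the row sums (\ref{eq:equations_cond}), and then recovers $A^{-1}=C^{-1}B^{-1}\geq0$ to invoke condition 1. Because the pivot lies inside $C_U$, the $k$-th row of $B$ is structurally different from the others, which is what forces the paper's two separate conditions $f(\alpha)>0$ and $g(\alpha)>0$ in (\ref{eq:larger_eq1})--(\ref{eq:g_alpha_eq}) and the case analysis $r=n-2$ versus $0<r\leq n-3$. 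You instead verify semi-positivity (condition 2) on $A$ itself with the explicit certificate $x_i=m_{i,p}$ for a \emph{reference} pivot $p>k$; since $p$ is then never a summation index, all $k$ rows are treated uniformly, the factorization and the case split vanish, and your closing algebra on $(n-1)u^2-u-(n-r-2)>0$ reproduces (\ref{eq:condit}) exactly, so the argument goes through. Two remarks on what each route buys. First, your per-term accounting is valid but needlessly lossy, and it is in tension with your own observation that each summand is governed by a single Koczkodaj triad: since $j$, $p$, $i$ are pairwise distinct, one application of (\ref{eq:triad_estim}) gives $m_{ji}m_{ip}/m_{jp}\leq 1/(1-\mathcal{K}(M))$ for \emph{every} one of the $n-r-1$ summands -- no two-step path through an intermediate concept is needed, and your $1/(1-\mathcal{K}(M))^{2}$ bound survives only because $1/u\leq 1/u^{2}$ on $(0,1]$; it is precisely this slack that makes your row condition collapse onto the paper's quadratic $f$. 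With the sharp one-triad bound, the same certificate yields row positivity already under $(n-1)\alpha>n-r-1$, i.e.\ $\mathcal{K}(M)<r/(n-1)$, which is the paper's condition (\ref{eq:cond2}) alone and is strictly weaker than (\ref{eq:condit}) whenever $0<r\leq n-3$ -- so your method, pushed to its natural end, proves a \emph{stronger} theorem, and the quadratic threshold emerges as an artifact of pivoting inside $C_U$, where the pivot's own row gives $g$ while the remaining rows give $f$. Second, the paper's choice is not without compensation: because all of its triads stay inside $C_U$, its proof supports the concluding remark that only $\mathscr{K}(\widetilde{M})$, the inconsistency of the minor on the unknown concepts, matters; your triads pass through the reference concept $p$, so your variant requires inconsistency control on the minor indexed by $C_U\cup\{p\}$ rather than on $C_U$ alone.
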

\begin{proof}
Following (Def. \ref{def:Koczkodaj's-inconsistency-index}), the value
of \emph{Koczkodaj's inconsistency index} $\mathscr{K}(M)$, in short
$\mathscr{K}$, means that the maximal inconsistence for some triad
$m_{pq},m_{qr}$ and $m_{pr}$ is $\mathscr{K}$. Thus, in the case
of an arbitrarily chosen triad $m_{ik},m_{kj},m_{ij}$ it must hold
that: 
\begin{equation}
\mathscr{K}\geq\kappa_{i,j,k}=\min\left\{ \left|1-\frac{m_{ij}}{m_{ik}m_{kj}}\right|,\left|1-\frac{m_{ik}m_{kj}}{m_{ij}}\right|\right\} \label{eq:K_label}
\end{equation}
This means that either: $m_{ij}\leq m_{ik}m_{kj}$ implies that $\mathscr{K}\geq1-\frac{m_{ij}}{m_{ik}m_{kj}}$,
or $m_{ik}m_{kj}\leq m_{ij}$ implies that $\mathscr{K}\geq1-\frac{m_{ik}m_{kj}}{m_{ij}}$.
Denoting 
\begin{equation}
\alpha\overset{\textit{df}}{=}1-\mathscr{K}\label{eq:alpha_def}
\end{equation}
 we obtain the result that either $m_{ij}\leq m_{ik}m_{kj}$ implies
$m_{ij}\geq\alpha\cdot m_{ik}m_{kj}$, or $m_{ik}m_{kj}\leq m_{ij}$
implies that $\frac{1}{\alpha}\cdot m_{ik}m_{kj}\geq m_{ij}$. It
is easy to see that $0\leq\mathcal{K}<1$, thus $0<\alpha\leq1$.
Thus, both these assertions lead to the common conclusion: 

\begin{equation}
\alpha\cdot m_{ik}m_{kj}\leq m_{ij}\leq\frac{1}{\alpha}m_{ik}m_{kj}\label{eq:triad_estim}
\end{equation}

for every $i,j,k\in\{1,\ldots,n\}$. This mutual relationship between
entries of $M$ can be written as the parametric equation $m_{ij}=t\cdot m_{ik}m_{kj}$
where $\alpha\leq t\leq\frac{1}{\alpha}$. Using this equation the
matrix $A$ (see \ref{eq:A_and_b}) can be written as:
\begin{equation}
A=\left[\begin{array}{ccc}
t_{1,1}m_{1,k}m_{k,1} & \cdots & -\frac{m_{1,k}}{n-1}\\
\vdots & \vdots & \vdots\\
-\frac{t_{k-1,1}m_{k-1,k}m_{k,1}}{n-1} & \ddots & -\frac{m_{k-1,k}}{n-1}\\
-\frac{t_{k,1}m_{k,1}}{n-1} & \cdots & 1
\end{array}\right]\label{eq:A_eq}
\end{equation}

where $\alpha\leq t_{ij}\leq\frac{1}{\alpha}$, for $i,j\in\{1,\ldots,k-1\}$.
Hence, finally the matrix $A$ can be written as the matrix product
$A=BC$ where: 
\begin{equation}
B=\left[\begin{array}{cccc}
t_{1,1}m_{1,k} & \cdots & \cdots & -\frac{m_{1,k}}{n-1}\\
\vdots & \ddots & \vdots & \vdots\\
-\frac{t_{k-1,1}m_{k-1,k}}{n-1} & \vdots & t_{k-1,k-1}m_{k-1,k} & -\frac{m_{k-1,k}}{n-1}\\
-\frac{t_{k,1}}{n-1} & \cdots & -\frac{t_{k,k-1}}{n-1} & 1
\end{array}\right]\label{eq:B_eq}
\end{equation}

and

\begin{equation}
C=\left[\begin{array}{cccc}
m_{k,1} & 0 & \cdots & 0\\
0 & \ddots & \cdots & 0\\
\vdots & \vdots & m_{k,k-1} & \vdots\\
0 & \cdots & \cdots & 1
\end{array}\right]\label{eq:C_eq}
\end{equation}

Since both $t_{ij}$ and $m_{ij}$ are strictly positive, it holds
that $B\in\mathcal{M}_{\mathbb{Z}}(n)$. Therefore, due to the third
condition of the Theorem \ref{PlemmonsTheo} where $D\overset{\textit{df}}{=}I$,
$B$ is a nonsingular \emph{M-matrix} if sums of all its rows are
positive. In other words $B$ is \emph{an M-matrix }if each of the
following inequalities (\ref{eq:equations_cond}) are true. 

\begin{equation}
\begin{array}{ccc}
m_{1,k}(n-1)t_{1,1}-m_{1,k}(t_{1,2}+t_{1,3}+\dotfill+t_{1,k-1}+1) & \geq & 0\\
m_{2,k}(n-1)t_{2,2}-m_{2,k}(t_{2,1}+t_{2,3}+\dotfill+t_{2,k-1}+1) & \geq & 0\\
\hdotsfor[1]{3}\\
(n-1)-(t_{k,1}+t_{k,2}+\dotfill+t_{k,k-1}) & \geq & 0
\end{array}\label{eq:equations_cond}
\end{equation}

Due to the constraints introduced by the inconsistency $\mathcal{K}(M)$
the minimal and the maximal value of every $t_{ij}$ is $\alpha$
and $\frac{1}{\alpha}$ correspondingly. Thus the inequalities (\ref{eq:equations_cond})
are true if the following two inequalities are satisfied: 
\begin{equation}
(n-1)\alpha>(\underset{n-r-2}{\underbrace{\frac{1}{\alpha}+\ldots+\frac{1}{\alpha}}}+1)\,\,\,\,\,\text{and}\,\,\,\,\,(n-1)>(\underset{n-r-1}{\underbrace{\frac{1}{\alpha}+\ldots+\frac{1}{\alpha}}})\label{eq:cond_fin_eq1}
\end{equation}

where $r=n-k$ is the number of elements in $C_{K}$. In other words
$B$ is \emph{an M-matrix} if the following two conditions are met:
\begin{equation}
f(\alpha)>0,\,\,\,\text{where}\,\,\, f(\alpha)\overset{\textit{df}}{=}(n-1)\alpha^{2}-\alpha-(n-r-2)\label{eq:larger_eq1}
\end{equation}

and 
\begin{equation}
g(\alpha)>0,\,\,\,\text{where}\,\,\, g(\alpha)\overset{\textit{df}}{=}(n-1)\alpha-(n-r-1)\label{eq:g_alpha_eq}
\end{equation}

By solving $f(\alpha)=0$ and choosing the larger root %
\footnote{The smaller root $\frac{1-\sqrt{1+4(n-1)(n-r-2)}}{2(n-1)}\leq0$ for
any $n=3,4\ldots$ and $0<r\leq n-2$, so it does not need to be taken
into account.%
} we obtain the result that: 
\begin{equation}
\mathcal{K}(M)<1-\frac{1+\sqrt{1+4(n-1)(n-r-2)}}{2(n-1)}\label{eq:cond1}
\end{equation}

whilst the right, linear inequality $g(\alpha)>0$ leads to 
\begin{equation}
\mathcal{K}(M)<1-\frac{(n-r-1)}{(n-1)}\label{eq:cond2}
\end{equation}

In order to decide which of these criteria are more restrictive and
which should therefore be chosen, the following two cases need to
be considered: 
\begin{lyxlist}{00.00.0000}
\item [{(a)}] $r=n-2$ 
\item [{(b)}] $0<r\leq n-3$
\end{lyxlist}
When $r=n-2$ it is easy to see that $f(\alpha)=\alpha g(\alpha)$.
Thus both functions $f(\alpha)$ and $g(\alpha)$ take the $0$ value
for the same values of argument $\alpha$. Hence, both criteria (\ref{eq:cond1})
and (\ref{eq:cond2}) are equal. 

If $0<r\leq n-3$ it is easy to see%
\footnote{To demonstrate this please consider the sequence of inequalities $\left(\frac{1+\sqrt{1+4(n-1)(n-r-2)}}{2(n-1)}\right)^{2}\geq\ldots\geq\frac{4(n-1)(n-r-2)}{4(n-1)^{2}}\geq\left(\frac{n-r-1}{n-1}\right)^{2}$.%
} that the first condition (\ref{eq:cond1}) is more restrictive than
(\ref{eq:cond2}), i.e. wherever (\ref{eq:cond1}) holds the inequality
(\ref{eq:cond2}) is also true. In other words, to provide a guarantee
that $B$ is an \emph{M-matrix} it is enough to consider the more
restrictive condition (\ref{eq:cond1}). 

The fact that $B$ is \emph{an M-matrix} implies that there exists
an inverse matrix $B^{-1}\geq0$ (Theorem \ref{PlemmonsTheo}). Hence,
due to the form of the matrix $C$ it is easy to see that the inverse
matrix $C^{-1}$ exists, thus $A^{-1}$ exists and $A^{-1}=C^{-1}B^{-1}\geq0$.
Thus, due to the first condition of the Theorem \ref{PlemmonsTheo},
$A$ is \emph{an M-matrix}, which means that the equation $A\mu=b$
has a unique strictly positive solution. This conclusion completes
the proof of the theorem. 
\end{proof}
Of course, the theorem proven above does not address the case $r=n-1$.
This is because $r=n-1$ implies $A$ is a scalar, hence solving $A\mu=b$
is trivial. When $M$ is fully consistent, i.e. $\mathcal{K}(M)=0$
and $\alpha=1$, it is easy to see that both conditions (\ref{eq:cond_fin_eq1})
are satisfied. Thus, in such a case $A$ is an \emph{M-matrix}, and
what follows $A\mu=b$ always has strictly positive solution. Several
upper bounds for $\mathcal{K}(M)$ related to parameters $n$ and
$r$ arising from the above theorem are gathered in the Table \ref{tbl:maximal_K_vals}. 

\begin{table}
\begin{centering}
\begin{tabular}{|c|c|c|c|c|c|}
\hline 
$0\leq\mathcal{K}(M)<$ & $r=1$ & $r=2$ & $r=3$ & $r=4$ & $r=5$\tabularnewline
\hline 
$n=3$ & $0.5$ & - & - & - & -\tabularnewline
\hline 
$n=4$ & $0.232$ & $0.666$ & - & - & -\tabularnewline
\hline 
$n=5$ & $0.156$ & $0.359$ & $0.75$ & - & -\tabularnewline
\hline 
$n=6$ & $0.118$ & $0.259$ & $0.441$ & $0.8$ & -\tabularnewline
\hline 
$n=7$ & $0.095$ & $0.204$ & $0.333$ & $0.5$ & $0.833$\tabularnewline
\hline 
\end{tabular}
\par\end{centering}

\caption{The upper bounds for $\mathcal{K}(M)$ for which there is a guarantee
that $A$ is an \emph{M-matrix}}
\label{tbl:maximal_K_vals}
\end{table}

\begin{remark}
Let us note that for any combination of $r$ and $n$ where $0<r\leq n-2$,
where $r,n\in\mathbb{N}_{+}$ the right side of (\ref{eq:cond1})
is greater than $0$. In other words for a sufficiently low inconsistency
the equation $A\mu=b$ always has a feasible solution. 
\end{remark}
To prove this (see \ref{eq:cond1}) it is enough to show that for
$n=3,4,\ldots$ holds: 
\begin{equation}
\frac{\left(1+\sqrt{1+4(n-1)(n-r-2)}\right)}{2(n-1)}<1\label{eq:exist_eq_1}
\end{equation}
Since $\sqrt{1+4(n-1)(n-r-2)}\leq\sqrt{1+4(n-1)(n-3)}$, thus in particular
\begin{equation}
\frac{\left(1+\sqrt{1+4(n-1)(n-3)}\right)}{2(n-1)}<1\label{eq:exist_eq_2}
\end{equation}
Thus, 
\begin{equation}
\sqrt{1+4(n-1)(n-3)}<2n-3\label{eq:exist_eq_3}
\end{equation}
and 

\begin{equation}
4(n-1)(n-3)<\left(2n-3\right)^{2}-1\label{eq:exist_eq_4}
\end{equation}
which is equivalent to

\begin{equation}
4(n-1)(n-3)<4(n-1)(n-2)\label{eq:exist_eq_5}
\end{equation}
Thus, for every $n>1$ the above equation reduces to: 
\begin{equation}
n-3<n-2\label{eq:exist_eq_6}
\end{equation}
The last inequality is always satisfied, which proves that (\ref{eq:exist_eq_1})
is true for $n\geq3$. 
\begin{remark}
Another interesting observation is that the proof of the Theorem \ref{main_theorem}
takes into account only that entries of the matrix $M$, that make
up the matrix $A$. Hence, there is no need to analyse the inconsistency
for the whole matrix $M$. Instead, it is enough to analyse $\widetilde{M}$
- the minor of $M$ whose rows and columns correspond to the elements
from the set of unknown concepts $C_{U}$. It also holds that $\mathscr{K}(\widetilde{M})\leq\mathscr{K}(M)$.
Thus, it may happen that the inconsistency of $\widetilde{M}$ meets
the condition (\ref{eq:cond1}), whilst the inconsistency of $M$
is too high.

Assuming that $C_{U}=\{c_{1},\ldots,c_{k}\}$ let us define the matrix
$\widetilde{M}$ as follows:
\begin{equation}
\widetilde{M}=\left[\begin{array}{ccc}
1 & \cdots & m_{1,k}\\
\vdots & \vdots & \vdots\\
m_{k-1,1} & \ddots & m_{k-1,k}\\
m_{k,1} & \cdots & 1
\end{array}\right]\label{eq:minor_matrix}
\end{equation}
It might be noticed that assuming $\alpha\overset{\textit{df}}{=}1-\mathscr{K}(\widetilde{M})$
in (\ref{eq:alpha_def}) the proof of the Theorem \ref{main_theorem}
does not change. Hence, instead of exploring the inconsistency of
$M$ it is sufficient to examine the inconsistency of its minor $\widetilde{M}$.
Thereby, the upper bounds given in the Table \ref{tbl:maximal_K_vals}
can be applied to $\mathscr{K}(\widetilde{M})$ instead of $\mathscr{K}(M)$. 

By definition of \emph{Koczkodaj inconsistency index,} $\mathscr{K}(M)$
is the maximum of $T_{M}=\{\kappa_{i,j,r}:i,j,r\in\{1,\ldots,n\}\}$
. Similarly, $\mathscr{K}(\widetilde{M})$ is the maximum of $T_{\widetilde{M}}=\{\kappa_{i,j,r}:i,j,r\in\{1,\ldots,k\}\}$,
where $k$ is the number of elements in $C_{U}$. It is easy to see
that $T_{\widetilde{M}}\subseteq T_{M}$. This implies that also $\max T_{\widetilde{M}}\leq\max T_{M}$,
which leads to the conclusion that $\mathscr{K}(\widetilde{M})\leq\mathscr{K}(M)$. 
\end{remark}

\section{Summary }

The reliability of the results achieved in the \emph{PC} models are
inseparably linked to the degree of inconsistency of the input data
\citep{Saaty1977asmfSIMPL}. The lower the inconsistency the better
and more reliable the results might be expected to be. Therefore,
most practical applications of the \emph{PC} method seek to construct
the \emph{PC} matrix with the smallest possible inconsistency. The
theorem proven in this article is in line with the tendency to seek
\emph{PC} solutions with low inconsistency. It shows that for an appropriately
small inconsistency $\mathcal{K}(M)$ the recently proposed \emph{Heuristic
Rating Estimation} method always has a feasible solution.

\bibliographystyle{elsart-num-sort}
\bibliography{papers_biblio_reviewed}

\end{document}